\newtheorem{definition}{Definition}[section]
\newtheorem{theorem}[definition]{Theorem}
\newtheorem{lemma}[definition]{Lemma}
\newcommand{\R}{\mathbb{R}}
\renewcommand{\O}{\mathcal{O}}
\author{T. Heller, S.O. Krumke}
\begin{document}          
	\title{Computing the egalitarian allocation with network flows}

	\author{T. Heller\footnote{\texttt{till.heller@itwm.fraunhofer.de}, corresponding author} \and S.O. Krumke}	
	
	\maketitle              
	
	\begin{abstract}
		In a combinatorial exchange setting, players place sell (resp. buy) bids on combinations of traded goods. Besides the question of finding an optimal selection of winning bids, the question of how to share the obtained profit is of high importance.
		The egalitarian allocation is a well-known solution concept of profit sharing games which tries to distribute profit among players in a most equal way while respecting individual contributions to the obtained profit. Given a set of winning bids, we construct a special network graph and show that every flow in said graph corresponds to a core payment. Furthermore, we show that the egalitarian allocation can be characterized as an almost equal maximum flow which is a maximum flow with the additional property that the difference of flow value on given edge sets is bounded by a constant. With this, we are able to compute the egalitarian allocation in polynomial time.  
\end{abstract}

\section{Introduction}
In a combinatorial exchange, players place sell (resp. buy) bids on combinations of traded goods. Combinatorial exchanges are often used to efficiently distribute various goods among a group of players, such as landing and take-off slots for airplanes (cf. \cite{rassenti1982combinatorial, gruyer2003auctioning}), collaborative planning (cf. \cite{hunsberger2000planning}) or in the area of transport logistics (cf. \cite{ledyard2002first, kalagnanam2001computational}). Besides the question of finding an optimal selection of winning bids, the question of how to share the obtained profit is of high importance.

\emph{Cooperative games} are a well-established approach for distributing profit among a group of cooperating players. A solution to such a cooperative game is given by a payment vector that stores the profit for each player. Since players wants to be treated in a fair way, one requires additional properties of the payment vector. One of these desired properties is equality between all players. Dutta and Ray introduced in \cite{dutta1989concept} the \emph{egalitarian allocation}, which is a payment vector that tries to split the available profit in a "most equal" way among the players. They gave both uniqueness and existence results. In order to compute the egalitarian allocation, they proposed an algorithm that requires to compute a maximally sized coalition that maximizes the average contribution in every iteration. In general, this needs an exponential number of evaluations of the characteristic function and, thus, the algorithm has an exponential running time in general.

In this paper, we show how to construct a network graph depending on the instance of the combinatorial exchange setting. We give an example that shows that an interpretation as a flow game on the whole instance yields in an empty core in general. Thus, we restrict ourselves to a solution of the combinatorial exchange. With this, we are able to construct a network graph in which every maximum flow induces a payment vector that is contained in the core of the corresponding minimum cost flow game. Furthermore, we exhibit an interesting connection between the egalitarian allocation being characterized by the almost equal flow with the smallest constant deviation. This provides a polynomial time algorithm for computing the egalitarian allocation. 

The rest of the paper is structured as follows. In the next section we give a brief introduction to core elements and the egalitarian allocation, as well as a definition of an \emph{almost equal flow}. In Section~\ref{sec: egal: flow games}, we provide the construction of the network graph as well as the characterization results. We conclude with a short outlook. 

\section{Preliminaries}
In this section we will briefly cover the preliminaries for the rest of the paper. 
\subsection{Game Theory}
A \emph{cooperative game} is given by a tuple~$(N,v)$, where $N$ denotes the set of players and $v: 2^N \mapsto \R$ denotes the \emph{characteristic function}, which maps each set of players (also called \emph{coalition}) to its value. A payment vector~$p\in\R^{|N|}$ denotes the payment to each of the players. It is said to fulfill \emph{individual rationality} (IR) if $p_i\geq 0$ holds for every player~$i$. The payment vector fulfills \emph{coalitional rationality} (CR) if $p_S = \sum_{i\in S} p_i \geq v(S)$ holds for every coalition~$S$. Furthermore, it fulfills \emph{efficiency} (EFF) if $p_N = v(N)$ holds and it fulfills \emph{budget balance} (BB) if $p_N \leq v(N)$ holds. The set of all payments that fulfill (IR), (CR) and (EFF) is called \emph{core} and is a well-established solution concept to cooperative games. The egalitarian allocation (EA) is special payment vector. It was shown that the EA always exists, in contrast to the core which can be empty. But, if a convex game is considered, the EA is always contained in the core (cf. \cite{dutta1989concept}). Furthermore, the EA is unique (cf. \cite{dutta1989concept}) and it is the lexicographical maximal vector in the core (cf. \cite{graef2021connection}). 

\subsection{Almost Equal Flows}\label{sec: aemfp}
Given a network graph~$G=(V,E)$ with a source~$s$ and a sink~$t$. Given \emph{homologous edge sets}~$R_i\subseteq E$ with a deviation value~$\delta_i$, the \emph{constant Almost-Equal-Maximum Flow Problem} (AEMFP) asks to find a maximum $s$-$t$-flow such that the difference of flow values on edges contained in the same homologous set is at most the deviation value. It was shown that the constant AEMFP can be solved by a strongly polynomial algorithm which depends on the parametric search technique of Megiddo (cf. \cite{haese2020algorithms}).

In the next section, we will define a flow game based on a combinatorial exchange solution. 

\section{Application to Flow Games}\label{sec: egal: flow games}

%


We consider a combinatorial energy exchange where profit is generated based on the flow of divisible items (e.g. energy) between a producer, i.e. a source node, and a consumer, i.e. a sink node. This can be understood as a minimum cost flow with negative costs which are given by the difference between buy and sell prices. In a first step we define the class of \emph{minimum cost flow games}, which is closely related to the \emph{flow games} (cf. \cite{kalai1982totally}). Let $G=(V,E)$ denote the \emph{exchange graph} with lower and upper capacity functions~$l,u: E\mapsto \R$ and a profit function~$c: E\mapsto \R_+$. Further, each node~$v$ has an associated balance~$b_v$. Now let $N$ denote the set of players. Each player controls some of the edges~$e\in E$. The payment to a player~$p$ is equal to the sum~$\sum_{e\in E_p} c_e f_e$, where $E_p$ denotes the set of edges controlled by player~$p$.

Note that one can show that the core is in general empty. To overcome this shortcoming, we apply the following transformation to our exchange graph to obtain the \emph{profit sharing graph} (see Figure~\ref{fig: transformation}). First, subdivide each of the edges~$e$ between time point nodes by introducing a node~$v$. Further, add a source node~$s$ to the graph and edges~$e=(s,v)$ to all introduced subdividing nodes~$v$ with upper capacity~$u_e = f^*(e)c(e)$. Then add a sink node~$t'$ to the graph and all contract nodes with their time point nodes. The contract nodes are still adjacent to their time point nodes and all contract nodes are adjacent to the sink~$t'$. The upper capacity of edges between a time point node and its contract node in profit sharing graph is given by the capacity of the edge between the same nodes in the exchange graph multiplied by the cost of the incident edge between two time point nodes.

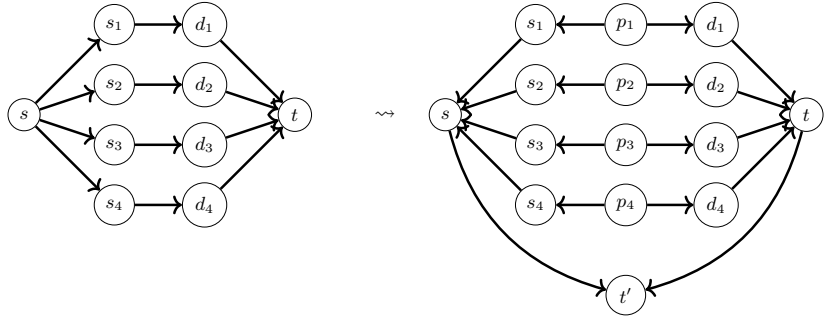
\begin{figure}
	\centering
	\scalebox{0.8}{
	\begin{tikzpicture}[every node/.style={fill=white,rectangle}, every edge/.style={draw=black,very thick}]
	\begin{scope}
	\node[draw, circle, font=\small] (s) at (0,0) {$s$};
	\node[draw, circle, font=\small] (t) at (4.5,0) {$t$};
	\node[draw, circle, font=\small] (s1) at (1.5,1.5) {$s_1$};
	\node[draw, circle, font=\small] (s2) at (1.5,0.5) {$s_2$};
	\node[draw, circle, font=\small] (s3) at (1.5,-0.5) {$s_3$};
	\node[draw, circle, font=\small] (s4) at (1.5,-1.5) {$s_4$};
	\node[draw, circle, font=\small] (d1) at (3,1.5) {$d_1$};
	\node[draw, circle, font=\small] (d2) at (3,0.5) {$d_2$};
	\node[draw, circle, font=\small] (d3) at (3,-0.5) {$d_3$};
	\node[draw, circle, font=\small] (d4) at (3,-1.5) {$d_4$};
	
	\path[->] (s) edge (s1);
	\path[->] (s) edge (s2);
	\path[->] (s) edge (s3);
	\path[->] (s) edge (s4);
	
	\path[->] (s1) edge (d1);
	\path[->] (s2) edge (d2);
	\path[->] (s3) edge (d3);
	\path[->] (s4) edge (d4);
	
	\path[->] (d1) edge (t);
	\path[->] (d2) edge (t);
	\path[->] (d3) edge (t);
	\path[->] (d4) edge (t);
	
	\end{scope}
	
	\node[draw=none] (e) at (6,0) {$\rightsquigarrow$};
	\begin{scope}[xshift=7cm]
	\node[draw, circle, font=\small] (s) at (0,0) {$s$};
	\node[draw, circle, font=\small] (d) at (6,0) {$t$};
	
	\node[draw, circle, font=\small] (t) at (3,-3) {$t'$};
	
	\node[draw, circle, font=\small] (s1) at (1.5,1.5) {$s_1$};
	\node[draw, circle, font=\small] (s2) at (1.5,0.5) {$s_2$};
	\node[draw, circle, font=\small] (s3) at (1.5,-0.5) {$s_3$};
	\node[draw, circle, font=\small] (s4) at (1.5,-1.5) {$s_4$};
	\node[draw, circle, font=\small] (d1) at (4.5,1.5) {$d_1$};
	\node[draw, circle, font=\small] (d2) at (4.5,0.5) {$d_2$};
	\node[draw, circle, font=\small] (d3) at (4.5,-0.5) {$d_3$};
	\node[draw, circle, font=\small] (d4) at (4.5,-1.5) {$d_4$};
	
	\node[draw, circle, font=\small] (p1) at (3,1.5) {$p_1$};
	\node[draw, circle, font=\small] (p2) at (3,0.5) {$p_2$};
	\node[draw, circle, font=\small] (p3) at (3,-0.5) {$p_3$};
	\node[draw, circle, font=\small] (p4) at (3,-1.5) {$p_4$};
	
	\path[<-] (s) edge (s1);
	\path[<-] (s) edge (s2);
	\path[<-] (s) edge (s3);
	\path[<-] (s) edge (s4);
	
	\path[->] (p1) edge (d1);
	\path[->] (p2) edge (d2);
	\path[->] (p3) edge (d3);
	\path[->] (p4) edge (d4);
	
	\path[->] (p1) edge (s1);
	\path[->] (p2) edge (s2);
	\path[->] (p3) edge (s3);
	\path[->] (p4) edge (s4);
	
	\path[->] (d1) edge (d);
	\path[->] (d2) edge (d);
	\path[->] (d3) edge (d);
	\path[->] (d4) edge (d);
	
	\path[->] (d) edge[bend left] (t);
	\path[->] (s) edge[bend right] (t);
	\end{scope}
	
	\end{tikzpicture}}
	\caption{Example of the transformation of an exchange graph (left) to a profit sharing graph (right).}\label{fig: transformation}
\end{figure}  

Before we continue, we first define the \emph{extended set of ingoing edges}~$\delta_{\text{ext}}^+(S)$ of a set~$S$ of contract nodes as the set of edges such that their end node is adjacent to a contract node in $S$. More formally,
\begin{align*}
\delta_{\text{ext}}^+(S) \coloneqq \{e\in E : \exists n\in S \text{ s.t. } \omega(e) \in N(n)\}.
\end{align*}

The payment to a contract~$b$ is now given by the sum of ingoing flow at the corresponding contract node~$n_b$, or, equivalently, as the flow~$f_{n_b, t}$ on the edge~$(n_b, s)$. 

\begin{theorem}[Core Elements]
	Any maximum flow in the profit sharing graph defines a payment that fulfills the properties (IR), (CR), and (BB).
\end{theorem}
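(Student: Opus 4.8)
The plan is to verify (IR), (BB) and (CR) separately, with essentially all of the difficulty concentrated in (CR).

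First, (IR) and (BB) are immediate from the flow structure. For (IR), all lower capacities in the profit sharing graph are zero, so every feasible flow is nonnegative on each edge; since the payment to a contract $b$ is the ingoing flow at the node $n_b$, i.e. a sum of nonnegative flow values, we obtain $p_b \ge 0$ for every contract. For (BB), I would use the source cut: the only edges leaving $s$ are the edges $(s,v)$ introduced in the subdivision step, whose capacities are $f^*(e)c(e)$ for the subdivided edge $e$, and these edges form an $s$-$t'$ cut. Hence the value of any flow is at most $\sum_e f^*(e)c(e)=v(N)$, and since $p_N$ equals the total flow reaching $t'$ (the flow value), this gives $p_N\le v(N)$, which is exactly (BB). A maximum flow in fact saturates this cut, so equality holds, but only the upper bound is needed here.

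The heart of the proof is (CR), namely $p_S\ge v(S)$ for every coalition $S$. I would first rewrite the left-hand side: by flow conservation at the contract nodes of $S$ and at the subdividing nodes feeding them, $p_S=\sum_{b\in S}(\text{ingoing flow at }n_b)$ equals the amount of flow that the maximum flow routes across the extended ingoing set $\delta_{\text{ext}}^+(S)$ and into $S$. On the other side, I would read $v(S)$ as a maximum-flow value in the restricted profit sharing graph $G_S$ obtained by keeping only $s$, $t'$, the contract nodes of $S$, and the subdividing nodes adjacent to them. By the max-flow--min-cut theorem, $v(S)$ equals the capacity of a minimum cut of $G_S$, and since the contract-to-sink edges are uncapacitated the source-side edges of such a cut lie in $\delta_{\text{ext}}^+(S)$.

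It then remains to show that a globally maximum flow cannot deliver less than this minimum-cut capacity to $S$. The intended route is a saturation argument: because the contract-to-sink edges are uncapacitated, a maximum flow saturates every source edge $(s,v)$, so the full profit $f^*(e)c(e)$ of each subdivided edge is pushed onward into the adjacent contract nodes, and a deficit at $S$ would expose an augmenting $s$-$t'$ path through an $S$-contract, contradicting maximality. The step I expect to be the main obstacle is making this rigorous when a subdividing node feeds contract nodes both inside and outside $S$. For such shared nodes the restriction of the global flow to $G_S$ need not conserve flow, and profit fed to $v$ may legitimately leak to $S^{c}$, so the naive augmenting-path argument does not apply directly. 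The delicate point is therefore to show that the minimum cut of $G_S$ witnessing $v(S)$ embeds into the full profit sharing graph in such a way that only capacity genuinely forced into $S$ is counted toward $v(S)$; once this embedding is established, combining it with the saturation of the source edges yields $p_S\ge v(S)$ and completes the proof of (CR).
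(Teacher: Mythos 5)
Your treatment of (IR) and (BB) is fine and consistent with the paper's (the paper is terser on (BB), simply asserting that the maximum flow value equals the cost of the optimal exchange flow, but your source-cut argument is a valid way to make that precise). The problem is (CR), where you correctly locate the difficulty --- subdividing nodes whose profit can be routed to contracts both inside and outside $S$ --- but then leave exactly that point open: you declare the ``delicate point'' to be embedding the minimum cut of $G_S$ into the full graph so that only capacity genuinely forced into $S$ is counted, and you never say how this is done. That is the entire content of (CR), so as written the proof is incomplete.

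Moreover, the route you choose makes the obstacle harder than it needs to be. If $v(S)$ were the maximum flow value of a restricted graph $G_S$ that retains every subdividing node adjacent to \emph{some} contract of $S$ at its full capacity $c_ef^*_e$, the inequality $p_S\ge v(S)$ would in general be false: a shared subdividing node contributes its entire capacity to $v(S)$ (inside $G_S$ all of it can be routed into $S$), while a global maximum flow is free to send all of it to contracts outside $S$. The paper sidesteps this by working with the game-theoretic value directly: $v(S)=\sum_{e\in P(s,d):\, s,d\in S} c_ef^*_e$ counts only exchange edges \emph{both} of whose endpoint contracts lie in $S$, equivalently only subdividing nodes \emph{all} of whose outgoing edges enter $S$. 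The paper's (CR) argument restricts the sum $\sum_{e\in\delta_{\text{ext}}^+(S)}f_e$ to precisely those nodes, where saturation of the source edges (the same fact you use for (BB)) together with flow conservation forces the full amount $c_ef^*_e$ into $S$; the shared nodes are simply discarded from the lower bound because they contribute nothing to $v(S)$. That restriction step, matching the lower bound to the definition of $v(S)$, is the one idea your proposal is missing.
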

\begin{proof}
	Clearly, since a flow can only attend non-negative values, the payment to each contract is also non-negative. Thus, (IR) is fulfilled.
	
	The value of a maximum flow on the profit sharing graph is by definition equal to the cost of the given optimal minimum cost flow in the exchange graph. Thus, it is not possible to distribute more profit than obtained by given solution -- (BB) is always fulfilled.
	
	In order to prove (CR), let $f$ denote a maximum flow in the profit sharing graph~$G'$. Then for any subset~$S$ of nodes we show that $\sum_{e\in\delta_{\text{ext}}^+(S)} f_e \geq v(S)$	holds true. First, by restricting ourselves to edges~$e$ such that all edges incident to the start node~$\alpha(e)$ are contained in $\delta_{\text{ext}}^+(S)$, we get
	\begin{align}
	\sum_{e\in\delta_{\text{ext}}^+(S)} f_e \geq \sum_{e\in\delta_{\text{ext}}^+(S) \text{ s.t. } \forall e' \text{ incident to } \alpha(e) : e'\in \delta_{\text{ext}}^+(S)} f_e.
	\end{align} 
	This is again equal to the sum~$\sum_{e\in\delta_{\text{ext}}+(S)}\sum_{e'\in E' : \omega(e') = \alpha(e)} u'_e$. By the definition of the profit sharing graph, we know that $u'_e = c_ef^*_e$ for the cost function~$c$ and the optimal flow~$f^*$ in the exchange graph~$G$.
	
	Further, the value~$v(S)$ of a coalition~$S$ is given by $\sum_{e\in P(s,d) : s,d\in S} c_ef^*_e$ where $P(s,d)$ denotes the edges on a path from a contract node~$s$ to a contract node~$d$ in the exchange graph~$G$. Finally, we showed that every maximum flow in the profit sharing graph induces a core element for a minimum cost flow game on a exchange graph.
\end{proof}


As we have seen above, we now know that by computing a maximum flow in the profit sharing graph, we obtain a core element for a minimum cost flow game defined on an exchange graph. Given that, we can use an iterative process to compute the EA. 

For this we use an iterative process that solves an equal maximum flow problem in every step. First, for a given equal maximum flow~$F^*$ we partition the set~$E_P$ of edges with the sink~$t$ as end node into two sets, the set~$F$ of fixed edges and the set~$H$ of non-fixed edges. We say an edge~$e$ is \emph{fixed} if there exists no other equal maximum flow~$f'$ with a flow value~$f'(e) > f^*(e)$. The algorithm now works as follows: Initialize the set of non-fixed edges by the set~$\delta^-(t)$, i.e. the set of ingoing edges to the sink~$t$, and the set of fixed edges as the empty set. Now compute an equal maximum flow on the profit sharing graph~$G_P$ where the one homologous edge set is given by the set of non-fixed edges (recall the definition of the AEMFP in Section~\ref{sec: aemfp}). 
%
\begin{lemma}
	Given a profit sharing graph with $n$ nodes and $m$ edges, the algorithm described above runs in $\O(m\cdot(T_{EMF}(n,m, 1)))$ time, where $T_{EMF}(n,m,k)$ denotes the time needed for computing an equal maximum flow on a graph with $n$ nodes, $m$ edges and $k$ homologous edge sets.
\end{lemma}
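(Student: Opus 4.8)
The plan is to bound the total running time as the product of the number of iterations and the cost incurred in a single iteration. For the per-iteration cost, I would observe that each iteration of the algorithm computes exactly one equal maximum flow on the profit sharing graph~$G_P$ in which there is a single homologous edge set, namely the current set~$H$ of non-fixed edges. Hence one iteration costs $\O(T_{EMF}(n,m,1))$ directly by the definition of~$T_{EMF}$, provided that the bookkeeping of moving edges from~$H$ to~$F$ and of identifying the newly fixed edges can be carried out within the same asymptotic bound; I address this last point below.

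To bound the number of iterations, I would argue that the algorithm makes strict progress in every step, in the sense that at least one edge is moved from the non-fixed set~$H$ into the fixed set~$F$. After computing an equal maximum flow~$f^*$ with homologous set~$H$, the flow values on the edges of~$H$ are as equal as possible among all maximum flows; consequently the edge (or edges) carrying the smallest such value cannot have its flow increased in any other equal maximum flow and is therefore fixed according to the definition. Since the set $E_P\subseteq\delta^-(t)$ of edges entering the sink has size at most~$m$, and at least one of these edges leaves~$H$ per iteration while no edge ever re-enters~$H$, the algorithm terminates after at most $|E_P|=\O(m)$ iterations. Multiplying the $\O(m)$ iterations by the per-iteration cost $\O(T_{EMF}(n,m,1))$ then yields the claimed bound $\O(m\cdot T_{EMF}(n,m,1))$.

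The step I expect to be the main obstacle is the detection of the fixed edges together with the proof that this detection stays within the $T_{EMF}(n,m,1)$ budget. I would establish it by analyzing the residual network of the computed equal maximum flow: an edge~$e\in H$ is fixed precisely when there is no augmenting structure in the residual graph that would allow rerouting flow onto~$e$ without either destroying maximality or violating the equality constraint on the remaining non-fixed edges. Characterizing \emph{fixed} through reachability in the residual graph lets one identify all newly fixed edges by a single graph traversal whose cost is dominated by that of the flow computation itself. The delicate part is verifying that at least one edge is always certified as fixed by this criterion, since this is exactly what guarantees termination in $\O(m)$ rounds, and confirming that the equality constraint over the shrinking set~$H$ indeed pins down the value of the fixed edges as required by the egalitarian allocation. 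This is the point where the correctness of the Dutta--Ray-type fixing argument must be matched against the almost-equal flow formulation, and I would treat it as the crux of the analysis.
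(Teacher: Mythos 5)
Your proposal matches the paper's proof in both structure and substance: at least one edge is fixed per iteration (so at most $m$ iterations, since no edge re-enters the non-fixed set), and each iteration costs one equal-maximum-flow computation with a single homologous set, giving $\O(m\cdot T_{EMF}(n,m,1))$. The paper justifies the ``at least one edge is fixed'' step with a single sentence (otherwise the computed flow would not be maximal) and does not discuss the cost of detecting fixed edges at all, so your additional residual-graph discussion is extra care beyond what the paper provides rather than a divergence from it.
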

\begin{proof}
	In every iteration the algorithm fixes at least one edge since otherwise the computed equal maximum flow would not be maximal. Thus, at most~$m$ iterations are needed. In every iteration an equal maximum flow problem with one homologous edge set is solved. Overall, the running time is $\O(m\cdot(T_{EMF}(n,m, 1)))$.
\end{proof}
 Due to the following result, one can use algorithms for the AEMFP to compute the EA.
\begin{theorem}[Characterization as AEMF]
	The almost equal flow with the smallest deviation value corresponds to the EA.
\end{theorem}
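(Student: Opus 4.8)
The plan is to bridge the combinatorial object (maximum flows in the profit sharing graph) with the order-theoretic characterization of the EA, namely that it is the lexicographically maximal vector in the core (cf. \cite{graef2021connection}), which for a convex game coincides with the Lorenz-maximal core element. First I would fix the correspondence that the previous theorem already supplies in one direction: every maximum flow $f$ induces the payment vector $p$ with $p_i = f_{e_i}$ on the payment edge $e_i$ of contract $i$, and $p$ lies in the core. To upgrade this to a genuine characterization I would prove the converse inclusion as a lemma, namely that the projection of the maximum-flow polytope onto the payment coordinates equals the core. The key tool here is max-flow/min-cut: a coalition $S$ of contract nodes, via the set $\delta_{\text{ext}}^+(S)$ used in the Core Elements proof, realizes the constraint $p_S \geq v(S)$ as a cut constraint, and since the total flow value is pinned at $v(N)$ (efficiency), the attainable payment vectors are exactly those satisfying (IR), (CR) and (EFF). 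With this lemma in hand, optimizing any fairness criterion over maximum flows becomes identical to optimizing it over the core, so in particular the EA is attainable as a flow.

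Next I would translate the phrase ``smallest deviation value'' into an equalization statement. Taking a single homologous edge set consisting of all payment edges into the sink, an almost equal maximum flow with deviation $\delta$ is precisely a core payment satisfying $\max_i p_i - \min_i p_i \leq \delta$; hence the smallest feasible $\delta$ yields the core element of minimal payment spread. I would then identify this min-spread flow with the first round of the iterative procedure described above, and argue that ``fixing'' a payment edge — declaring that no equal maximum flow can raise its value — corresponds exactly to that coordinate attaining its final value. The residual problem on the non-fixed edges is again an almost-equal-flow instance whose min-cut structure encodes the Dutta--Ray reduced game, so the fixed edges at the first level correspond to the members of the most constrained coalition (the binding cut), equalized at the smallest attainable common value.

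The heart of the argument, and the step I expect to be the main obstacle, is to show by induction on the rounds of the fix-and-recurse procedure that the resulting vector is the leximin (Lorenz-maximal) vector of the core, hence the EA. The delicate point is verifying that \emph{local} min-spread equalization is consistent with the \emph{global} Lorenz order: I must show that equalizing the range in each round never depresses a coordinate below its EA value and never blocks a later coordinate from reaching its EA value. I would close this gap using convexity of the induced flow game (which guarantees that the leximin vector is the Lorenz-maximal core element) together with an augmenting-path exchange argument: any departure from the equalized flow can be rerouted without decreasing the already-fixed, smaller payments, which is exactly the local-exchange property that characterizes the leximin optimum. Combining this with the polytope-equals-core lemma of the first step then identifies the output of the smallest-deviation almost equal flow procedure with the unique EA.
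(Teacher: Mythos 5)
Your overall strategy diverges substantially from the paper's, and as it stands it has two genuine gaps. The paper does not argue by induction over the rounds of the fix-and-recurse algorithm at all: it takes the EA $a$ and an arbitrary core vector $b$ with smaller spread $b_n - b_1$ (both sorted increasingly) and derives a contradiction with the fact that the EA is the lexicographically maximal vector in the core, using as its one structural ingredient a ``dual shift'' property of the profit sharing construction --- any entry of a core payment can be decreased while increasing exactly one other entry, and a finite sequence of such monotone shifts transforms $a$ into $b$. That shift property is precisely the rerouting/exchange argument you defer to the very end of your sketch, and it is the actual engine of the proof; in your write-up it is announced (``any departure from the equalized flow can be rerouted without decreasing the already-fixed, smaller payments'') but never carried out, so the step you yourself flag as the main obstacle --- that local range equalization is consistent with the global leximin order --- remains open.

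The second gap is your appeal to convexity of the induced flow game. Flow games and minimum cost flow games are totally balanced but not convex in general, and nothing in the paper establishes convexity for the profit sharing graph; the paper deliberately sidesteps this by invoking only the leximin-maximality of the EA in the core, not Lorenz maximality via convexity. If your induction genuinely needs convexity to identify the leximin vector with the Lorenz-maximal core element, the argument does not go through as written. On the positive side, your preliminary lemma --- that the projection of the maximum-flow polytope onto the payment coordinates is exactly the core, so that the EA is attainable as a flow at all --- is a point the paper uses implicitly but never proves, and making it explicit would be a real improvement; it does not, however, substitute for the missing exchange step.
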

\begin{proof}
	Let $a$ denote the EA and let $b$, $b\neq a$, be another vector in the core with a smaller difference between the largest and the smallest entry. We assume that both vectors are given in an increasing order, i.e. $a_i \leq a_{i+1}$. We want to show that the difference between the largest entry~$a_n$ and the smallest entry~$a_1$ of $a$ is smaller than the difference~$b_n - b_1$ for any other allocation vector~$b$. We know that the EA is the lexicographical maximal vector in the core (cf. \cite{}). 
	
	First, suppose the smallest entry of $b$ is strictly greater than the smallest entry of $a$ and $b_n - b_1 \leq a_n - a_1$ holds. This is a contradiction since $b$ is lexicographically greater than $a$ and $a$ is the lexicographical greatest vector. Suppose now $b_1 = a_1$, i.e. the smallest entries of both vectors are the same	and $b_n < a_n$, i.e. $b_n - b_1 < a_n - a_1$ holds. By construction each entry can be decreased by increasing exactly one other entry. This again is a contradiction to $a$ being the lexicographical maximal vector. 
	
	Now assume $b_1 < a_1$ and $b_n - b_1 < a_n - a_1$ holds. Let $\delta_n \coloneqq a_n -b_n$. By construction, each entry can be decreased by increasing exactly one other entry. Since $b$ exists, we know that there exists a vector with smaller largest entry. We also know that by a finite sequence of dual shifts one can construct $b$ from $a$. Thus, there exist shifts~$S_1,\dots, S_l$ that reduce the largest entry of $a$ by $\delta_n$. By construction, no real capacity constraints exist and the decreasing shifts are monotonous, i.e. each shift decreases the largest entry and no artificial shifts of two other entries are necessary. We distinguish different cases.
	
	First, assume that the entry $a_n - \delta_n$ is still the largest entry in the vector~$a'$ constructed by the sequence of shifts~$S_1,\dots,S_l$. Since all shifts are monotonous, no other entry was decreased by these shifts. Thus, the vector~$a'$ is lexicographically larger than $a$ --- a contradiction. Now assume that the entry $a_n-\delta_n$ is not the largest entry in the constructed vector~$a'$. Then either the largest element in $a'$ is equal to $a_n$ or not. In the first case we can also apply a sequence of shifts that reduces the entry~$a'_n$ by $\delta_n$ and thus end up in the latter case where the largest entry~$a'_n$ is strictly smaller than $a_n$. Since all made shifts are monotonous, i.e. all other entries are only increasing, we end up with a vector~$a'$ that is lexicographically greater than $a$ --- again a contradiction.  
	%
\end{proof}

\section{Outlook}
In this paper we showed how to compute the EA in the context of flow games. Furthermore, we showed that the EA allows a characterization as an almost equal maximum flow with the smallest deviation value. With this, one can obtain strongly polynomial algorithms for computing the EA.

\bibliographystyle{plain}
\bibliography{references.bib}

\end{document}